\newcommand{\posp}{\text{Span}_+}
\newcommand{\oms}{\Omega_{s+}}
\crefname{app}{Appendix}{Appendices}
\crefname{cor}{Corollary}{Corollary}
\crefname{prop}{Proposition}{Proposition}
\crefname{lemma}{Lemma}{Lemma}
\crefname{defn}{Definition}{Definition}
\crefname{conj}{Conjecture}{Conjecture}
\crefname{exam}{Example}{Example}
\crefname{supp}{Supplemental Section}{Supplemental Section}
\crefname{subsection}{subsection}{subsections}
\newtheorem{theorem}{Theorem}
\newtheorem{cor}{Corollary}
\newtheorem{lemma}{Lemma}
\newcommand{\bs}{\boldsymbol}
\newcommand{\bb}{\mathbb}
\newcommand{\mcal}{\mathcal}
\newcommand{\eye}{\bs{I}}
\newcommand{\zero}{\bs{0}}
\newcommand{\lb}{\left(}
\newcommand{\rb}{\right)}
\newcommand{\lc}{\left\{}
\newcommand{\rc}{\right\}}
\newcommand{\lv}{\left\vert}
\newcommand{\rv}{\right\vert}
\newcommand{\lV}{\left\Vert}
\newcommand{\rV}{\right\Vert}
\newcommand{\LRV}[1]{{\left\vert\kern-0.25ex\left\vert\kern-0.25ex\left\vert #1 \right\vert\kern-0.25ex\right\vert\kern-0.25ex\right\vert}}
\newcommand{\nth}{^\mathsf{th}}
\newcommand{\tran}{^{\mathsf{T}}}
\newcommand{\rank}[1]{\mathsf{Rank}\lc#1\rc}
\newcommand{\matA}{\bs{A}}
\newcommand{\matB}{\bs{B}}
\newcommand{\matJ}{\bs{J}}
\newcommand{\matN}{\bs{N}}
\newcommand{\matP}{\bs{P}}
\newcommand{\matQ}{\bs{Q}}
\newcommand{\matZ}{\bs{Z}}
\newcommand{\bbC}{\bb{C}}
\newcommand{\bbR}{\bb{R}}
\newcommand{\calA}{\mcal{A}}
\newcommand{\calC}{\mcal{C}}
\newcommand{\calN}{\mcal{N}}
\newcommand{\calO}{\mcal{O}}
\newcommand{\calS}{\mcal{S}}
\newcommand{\calV}{\mcal{V}}
\newcommand{\calZ}{\mcal{Z}}
\newcommand{\veca}{\bs{a}}
\newcommand{\vecb}{\bs{b}}
\newcommand{\vecu}{\bs{u}}
\newcommand{\vecx}{\bs{x}}
\newcommand{\vecy}{\bs{y}}
\newcommand{\vecz}{\bs{z}}
\newcommand{\vecalpha}{\bs{\alpha}}
\newcommand{\vecrho}{\bs{\rho}}
\newcommand{\matPhi}{\bs{\Phi}}
\def\BibTeX{{\rm B\kern-.05em{\sc i\kern-.025em b}\kern-.08em
    T\kern-.1667em\lower.7ex\hbox{E}\kern-.125emX}}
\begin{document}
\title{Controllability of a Linear System with Nonnegative Sparse Controls}
\author{Geethu Joseph
\thanks{The author is with the Dept.\ of EECS at the Syracuse University, NY, USA, Email: gjoseph@syr.edu.}}
\maketitle
\begin{abstract}
This paper studies controllability of a discrete-time linear dynamical system using nonnegative and sparse inputs. These constraints on the control input arise naturally in many real-life systems where the external influence on the system is unidirectional, and activating each input node adds to the cost of control. We derive the necessary and sufficient conditions for controllability of the system, without imposing any constraints on the system matrices. Unlike the well-known Kalman rank based controllability criteria, the conditions presented in this paper can be verified in polynomial time, and the verification complexity is independent of the sparsity level.  The proof of the result is based on the analytical tools concerning the properties of a convex cone. Our results also provide a closed-form expression for the minimum number of control nodes to be activated at every time instant to ensure controllability of the system using positive controls.
\end{abstract}
\begin{IEEEkeywords}
Controllability, linear dynamical systems, sparsity, positive control
\end{IEEEkeywords}
\section{Introduction}
Controllability is one of the most fundamental concepts in control theory, which is related to the ability of a system to maneuver its states. Originally, controllability of a linear system was studied without any constraints on the inputs. These studies led to the complete characterization of controllability through the classical Kalman rank test and Popov-Belovich-Hautus (PBH) test~\cite{kalman1959general,hautus1970stabilization}. Traditional controllability has also been extended to the case where the admissible input set is constrained due to practical limitations. The different types of constraint sets that have been considered in the past are compact, convex, or quasi-convex sets~\cite{brammer1972controllability,schmitendorf1980null,son1981controllability,sontag1984algebraic}. In this paper, we deal with nonconvex and noncompact constraints on the input, namely, sparsity and nonnegativity. More precisely, we consider a linear dynamical system whose control input at every time instant has a few number of nonzero entries compared to its dimension, and all the nonzero entries are positive. For such a system, we investigate whether it is possible to steer the system from an arbitrary initial state to an arbitrary final state within a finite time duration.

\subsection{Motivation}
The sparsity constraint is desired in many real-world applications due to communication bandwidth, cost, or energy constraints~ \cite{joseph2019controllability}. Also, the nonnegativity constraint is frequently encountered in medical, ecological, chemical,
and economical applications where the controls have a unidirectional influence~\cite{saperstone1971controllability,benvenuti2003positive}. To motivate our setting with both sparsity and nonnegativity constraints, consider the problem of controlling the temperature of a large room with multiple heating elements~\cite{heemels1998linear}. The heaters can radiate heat energy into the room, but they cannot extract heat. So, the control is in one-way stream, and thus, the inputs are nonnegative. Also, it is desirable to maintain the temperature by operating as few heaters as possible to reduce the overall cost of operation. Consequently, the control input at every time instant is sparse. 

\subsection{Related work}
The two constraints that we consider here, sparsity and nonnegativity, have been studied separately in literature for several decades. We provide a short review of these works below:
\subsubsection{Sparsity constraint}
The study of a linear dynamical system under sparsity constraints dates back to 1972~\cite{athans1972determination}. Some recent works have addressed the problem of finding the sequence of sparse control inputs, both for the fixed and the time-varying set of control nodes, and the other related problems~\cite{olshevsky2014minimal,siami2018deterministic,zhao2016scheduling}. However, controllability under sparse inputs was not well-understood, until recently. The widely known condition
for sparse-controllability is the extended version of the Kalman rank test. This test is based on the rank of the Gramian matrix, and it is known to have combinatorial complexity. Hence, different quantitative measures of controllability based on the Gramian matrix have been considered: the smallest eigenvalue, the trace of the inverse, the inverse of the trace, the determinant, the maximum entry in the diagonal, etc.~\cite{pasqualetti2014controllability,chanekar2017optimal, nozari2017time,jadbabaie2018deterministic}. However, these strategies make the analysis cumbersome. Recently, a set of algebraically verifiable necessary and sufficient conditions for sparse controllability that are similar to the classical PBH test~\cite{hautus1970stabilization} were presented in \cite{joseph2019controllability}. Our work analyzes a more constrained system where the inputs are not only sparse but also nonnegative.

\subsubsection{Nonnegativity constraint}
Controllability of linear systems with nonnegative control inputs was first studied in \cite{brammer1972controllability} for continuous-time linear systems. These results were extended to discrete-time systems in \cite{evans1977controllability} for a single input system, and were further extended to multi-input systems in \cite{son1981controllability}. These papers have triggered a large number of studies dealing with the other related problems on controllability like approximate controllability, null controllability, the geometry of the reachability set, etc.~\cite{son1997approximate,benvenuti2006geometry,nguyen1986null}. Other extensions of the controllability result to nonstationary systems and third-order systems have also been investigated~\cite{phat1992constrained,benvenuti2011reachable}. To the best of our knowledge, none of the existing works in the literature has addressed the important problem of controllability of a linear system under both sparsity and nonnegativity constraints. 

\subsection{Our contributions}
We derive a set of necessary and sufficient conditions for controllability of a linear system under
nonnegative sparse inputs. We show that any system is controllable with nonnegative sparse control inputs if and only if it is controllable using nonnegative control inputs and the sparsity level is greater than the dimension of the null space of the state-transition matrix. Using this result, we show that the conditions for verifying controllability are non-combinatorial. Our approach is based on fundamental tools from analysis concerning the properties of positive spanning sets. 

\begin{table}
\caption{Notation}
\label{tab:notation}
\normalsize

\begin{center}
\begin{tabular}{|r|l|}
\hline
$\bbR$ & Set of real numbers\\
$\bbC$ & Set of complex numbers\\
 $\eye$ & Identity matrix\\
 $\zero$ &  All zero matrix (or vector)\\
$\matA_{\calA}$ & Submatrix of $\matA$ formed by the \\
&\hfill columns indexed by the set $\calA$\\
$\Vert  \cdot \Vert_0$ & $\ell_0$ norm of a vector\\
$\lv\cdot\rv$& Cardinality of a set\\
$\geq$ &  Element-wise inequality\\
&\hfill $\veca\geq\vecb$ implies $\veca_i\geq\vecb_i$, $\forall i$\\
$\posp\{\matA\}$ & Positive span of the columns of $\matA$ \\
&$\posp\{\matA\}=\lc\veca:\veca=\sum_{i=1}^m\alpha_i\matA_i,\alpha_i\geq0\rc$\\&\\
\hline
\end{tabular}
\end{center}

\end{table}

\section{Nonnegative Sparse Controllability}
We consider the discrete-time linear dynamical system $(\matA,\matB)$ whose state evolution model is as follows:
\begin{equation}
\vecx_k = \matA\vecx_{k-1}+\matB\vecu_{k},\label{eq:sys}
\end{equation}
where $\vecx_k\in\bbR^{N}$ and $\vecu_k\in\Omega\subseteq\bbR^{m}$ denote the state vector and the control vector at time $k$, respectively. The set $\Omega$ is the set of all admissible controls of the system. Also, $\matA\in\bbR^{N\times N}$ is the state-transition matrix, and $\matB\in\bbR^{N\times m}$ is the input matrix of the system. We assume that the control vectors are $s$-sparse and its nonzero entries are positive:
\begin{equation}\label{eq:constraint}
\Omega=\oms\triangleq \lc\vecz\in\bbR^m:\lV\vecz\rV_0\leq s \text{ and } \vecz \geq \zero\rc\subset\bbR^m.
\end{equation}
Our goal  is to examine controllability of the system, i.e., for any given pair $(\vecx_{\mathrm{initial}}\in\bbR^N,\vecx_{\mathrm{final}}\in\bbR^N)$, we test if  it is possible to find inputs from $\Omega$ such that $\vecx_K=\vecx_{\mathrm{final}}$ when $\vecx_0=\vecx_{\mathrm{intial}}$, for some  finite positive integer $K$. This notion of controllability is  henceforth referred to as \emph{nonnegative sparse controllability}. 

From \eqref{eq:sys}, the state vector at time $K$ is given by
\begin{equation}\label{eq:sys_concat}
\vecx_{K}-\matA^K\vecx_0 = \sum_{k=1}^{K}\matA^{K-k}\matB\vecu_k.
\end{equation}
Therefore, the system is controllable if and only if there exists a positive integer $K<\infty$ such that 
\begin{equation}\label{eq:combinatorial}
\bigcup_{\{\calS_k:\lv\calS_k\rv\leq s\}_{k=1}^K} \hspace{-0.5cm}\posp\lc \begin{bmatrix}\matA^{K-1} \matB_{\calS_1} & \matA^{K-2} \matB_{\calS_2} \ldots  \matB_{\calS_K}\end{bmatrix}\rc=\bbR^N,
\end{equation}
where $\posp$ is defined in \Cref{tab:notation}.
A brute force verification of the above condition is combinatorial, and hence, it is computationally heavy. In the sequel, we present a non-combinatorial verification procedure for testing controllability using nonnegative sparse controls. 

\section{Preliminaries}
Our system imposes two types of constraints on the set of admissible inputs (as given in \eqref{eq:constraint}): nonnegativity and sparsity. These two constraints have been separately dealt in the literature and we present the corresponding results below:
\renewcommand{\thetheorem}{\Alph{theorem}}
\renewcommand{\thecor}{\Alph{theorem}}
\begin{theorem}[{~\cite[Theorem 1]{son1981controllability}}]\label{thm:nonnegative}
The system $(\matA,\matB)$ defined in \eqref{eq:sys} is controllable using nonnegative control inputs if and only if the following conditions hold:
\begin{enumerate}[label=(\roman*)]
\item $\nexists\lb \lambda\in\bbC,\vecz\neq\zero\rb$ such that $\vecz\tran\matA=\lambda\vecz\tran$ and $\vecz\tran\matB=\zero$.\label{con:control_non}
\item $\nexists\lb\lambda\geq 0,\vecz\neq\zero\rb$ such that $\vecz\tran\matA=\lambda\vecz\tran$ and $\vecz\tran\matB\leq \zero$.\label{con:nonnegative_non}
\end{enumerate}
\end{theorem}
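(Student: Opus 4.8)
The plan is to characterize controllability through the reachable cone and then dualize. By \eqref{eq:sys_concat}, the set of displacements $\vecx_K-\matA^K\vecx_0$ attainable over all horizons $K$ and all admissible nonnegative inputs is exactly the convex cone $C=\posp\{\matA^{j}\matB\vece_{i}: j\ge0,\ 1\le i\le m\}$, so the system is controllable using nonnegative inputs if and only if $C=\bbR^N$. I would introduce the polar cone $W=\{\vecz\in\bbR^N:\vecz\tran\matA^{j}\matB\le\zero,\ \forall j\ge0\}$ and establish $C=\bbR^N\iff W=\{\zero\}$. One direction is immediate: any $\vecz\in W\setminus\{\zero\}$ gives $\vecz\tran(\vecx_K-\matA^K\vecx_0)=\sum_{k}\vecz\tran\matA^{K-k}\matB\vecu_k\le0$ for every nonnegative $\{\vecu_k\}$, so the reachable set lies in a half-space and $C\neq\bbR^N$. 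For the converse I would use convexity rather than any finite-horizon reduction: if $C\neq\bbR^N$ then $\overline{C}\neq\bbR^N$ (a convex set dense in $\bbR^N$ must equal $\bbR^N$), so $\overline{C}$ is a proper closed convex cone and a separating hyperplane, homogenized, yields a nonzero $\vecz$ with $\vecz\tran\matA^{j}\matB\le\zero$ for all $j$, i.e.\ $\vecz\in W$.

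It then remains to show that $W\neq\{\zero\}$ if and only if condition~(i) or condition~(ii) fails. A key structural fact is that $W$ is a closed convex cone invariant under $\matA\tran$, since $\vecz\in W$ forces $(\matA\tran\vecz)\tran\matA^{j}\matB=\vecz\tran\matA^{j+1}\matB\le\zero$ for all $j\ge0$. For the easy direction, suppose a condition fails. If (i) fails there is $(\lambda\in\bbC,\vecz\neq\zero)$ with $\vecz\tran\matA=\lambda\vecz\tran$ and $\vecz\tran\matB=\zero$; then $\vecz\tran\matA^{j}\matB=\lambda^{j}\vecz\tran\matB=\zero$, and since $\matA^{j}\matB$ is real, splitting $\vecz$ into real and imaginary parts gives a nonzero real vector in $W$. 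If (ii) fails there is a real $\lambda\ge0$ and real $\vecz\neq\zero$ with $\vecz\tran\matA=\lambda\vecz\tran$ and $\vecz\tran\matB\le\zero$; then $\vecz\tran\matA^{j}\matB=\lambda^{j}\vecz\tran\matB\le\zero$ because $\lambda^{j}\ge0$, so again $W\neq\{\zero\}$.

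The hard direction, extracting a suitable eigenvector from a nontrivial $W$, is where the convex-cone machinery does the work. Given $W\neq\{\zero\}$, I would split on the largest subspace $W\cap(-W)$ it contains, which is itself $\matA\tran$-invariant. If $W\cap(-W)\neq\{\zero\}$, then $\matA\tran$ restricted to this subspace has a (possibly complex) eigenvector $\vecz$; every element of the subspace satisfies both $\pm\vecz\tran\matA^{j}\matB\le\zero$ and hence $\vecz\tran\matA^{j}\matB=\zero$ for all $j\ge0$, so at $j=0$ we get $\vecz\tran\matB=\zero$ together with $\vecz\tran\matA=\lambda\vecz\tran$, i.e.\ (i) fails. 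Otherwise $W$ is a pointed closed convex cone mapped into itself by $\matA\tran$, and here I would invoke the finite-dimensional Perron--Frobenius/Krein--Rutman theorem: such a map admits an eigenvector $\vecz\in W\setminus\{\zero\}$ with a real eigenvalue $\lambda\ge0$. One obtains this by applying Brouwer's fixed-point theorem to the map induced by $\matA\tran$ on a compact base of the pointed cone $W$, the degenerate case $\matA\tran\vecz=\zero$ simply giving $\lambda=0$. Since $\vecz\in W$ forces $\vecz\tran\matB\le\zero$ at $j=0$, condition~(ii) fails.

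I expect this last step --- the pointed-cone eigenvector extraction, together with correctly handling the non-pointed split through $W\cap(-W)$ --- to be the main obstacle, as it is the only place that uses genuine cone geometry beyond linear algebra. The reachable-cone reduction, the duality $C=\bbR^N\iff W=\{\zero\}$, and the easy direction are all routine once the setup is in place.
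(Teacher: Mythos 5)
The paper does not prove this statement: it is Theorem~A, imported verbatim as \cite[Theorem 1]{son1981controllability} and used as a black box in the proofs of \Cref{lem:nonzero} and \Cref{thm:necc_suff}, so there is no in-paper proof to compare against. On its own merits, your proposal is the classical cone-duality argument (in the style of Brammer and of Son--Dang) and its essentials are sound: the reduction to $C=\bbR^N$, the duality $C=\bbR^N\iff W=\{\zero\}$ via density of a convex set and a homogenized separating hyperplane, the split of $W$ along its lineality space $W\cap(-W)$ to recover a complex eigenvector when the cone is not pointed, and the Brouwer fixed-point extraction of a nonnegative real eigenvector from a compact base when it is pointed. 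Two points deserve explicit care in a full write-up. First, controllability requires a \emph{finite} horizon $K$ for each state pair, whereas $C$ is a union over all horizons; you should note that the horizon-$K$ cones are nested increasing and that $\bbR^N$ admits a finite positive spanning set, so $C=\bbR^N$ already forces some single finite horizon cone to equal $\bbR^N$. Second, in the pointed case the Perron--Frobenius step needs the existence of a compact convex base, which holds because a pointed closed convex cone has a dual cone with nonempty interior (relative to the span of $W$); and the degenerate branch $\matA\tran\vecz=\zero$ for some $\vecz\in W\setminus\{\zero\}$ must be peeled off before defining the normalized self-map, exactly as you indicate. With those details filled in, the argument is complete.
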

Next, we present the results for controllability using sparse vectors:
\begin{theorem}[{~\cite[Theorem 1]{joseph2019controllability}}]\label{thm:sparse}
The system $(\matA,\matB)$ defined in \eqref{eq:sys} is controllable using $s-$sparse control inputs if and only if the following conditions hold:
\begin{enumerate}[label=(\roman*)]
\item $\nexists\lb \lambda\in\bbC,\vecz\neq\zero\rb$ such that $\vecz\tran\matA=\lambda\vecz\tran$ and $\vecz\tran\matB=\zero$.
\item The sparsity $s\geq N-\rank{\matA}$.
\end{enumerate}
\end{theorem}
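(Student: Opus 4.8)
The plan is to first recast controllability as a spanning condition. For $s$-sparse (sign-unconstrained) inputs the reachability identity \eqref{eq:combinatorial} holds verbatim with the positive span $\posp$ replaced by the ordinary linear span, so the system is controllable iff for some finite $K$ the union, over all supports $\{\calS_k\}_{k=1}^K$ with $|\calS_k|\le s$, of the column spaces of $[\matA^{K-1}\matB_{\calS_1}\ \cdots\ \matB_{\calS_K}]$ equals $\bbR^N$. For a fixed $K$ this is a finite union of subspaces, and a real vector space is never the union of finitely many proper subspaces; hence the union fills $\bbR^N$ iff a single support pattern already does. Re-indexing each term by the power $j=K-k$, I would thus reduce the theorem to the following: conditions (i)--(ii) hold iff there is a finite horizon and a choice of at most $s$ columns $T_j\subseteq\{1,\dots,m\}$ at each power $j\ge 0$ with $\mathrm{Span}\{\matA^j\matB_i:i\in T_j\}=\bbR^N$.

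Necessity is then direct. If condition~(i) fails, some $\vecz\neq\zero$ satisfies $\vecz\tran\matA=\lambda\vecz\tran$ and $\vecz\tran\matB=\zero$, whence $\vecz\tran\matA^{K-k}\matB=\lambda^{K-k}\vecz\tran\matB=\zero$ for every $k$; by \eqref{eq:sys_concat} every reachable $\vecx_K-\matA^K\vecx_0$ lies in the hyperplane $\vecz^{\perp}$, so no span can be $\bbR^N$ (equivalently, (i) is exactly the Kalman/PBH test for $(\matA,\matB)$). For condition~(ii), note that every power-$j$ term with $j\ge 1$ lies in $\range{\matA}$, while the power-$0$ columns contribute at most $s$ directions; hence each pattern's span is contained in $\range{\matA}+\mathrm{Span}\{\matB_{T_0}\}$, of dimension at most $\rank{\matA}+s$, forcing $s\ge N-\rank{\matA}$.

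For sufficiency I would construct one pattern under (i)--(ii). Only the power-$0$ columns can leave $\range{\matA}$; taking $\lambda=0$ in (i) gives no nonzero $\vecz$ with $\vecz\tran\matA=\zero$ and $\vecz\tran\matB=\zero$, i.e.\ $(\range{\matA}+\range{\matB})^{\perp}=\{\zero\}$, so $\range{\matA}+\range{\matB}=\bbR^N$ and I can choose $N-\rank{\matA}\le s$ columns of $\matB$ that extend a basis of $\range{\matA}$ to $\bbR^N$; assign these to power $0$. It remains to span $\range{\matA}$ from powers $j\ge 1$, which is possible in principle because controllability gives $\mathrm{Span}\{\matA^j\matB_i:j\ge 1\}=\matA\bbR^N=\range{\matA}$. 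The crux, and the step I expect to be hardest, is to realize this using at most $s$ columns at any single power. My approach is to pass to the $\matA$-invariant splitting $\bbR^N=\ker{\matA^N}\oplus\range{\matA^N}$: on the nonzero-eigenvalue factor $\matA$ acts invertibly, so the needed generators can be deferred to arbitrarily high powers, one per power (for which $s\ge 1$ is enough); on the nilpotent factor the vectors $\matA^j\matB_i$ eventually vanish and cannot be spread, but the number of independent directions forced at any one power is governed by the number of Jordan blocks at $0$, namely $\dim\ker{\matA}=N-\rank{\matA}\le s$.

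The delicate obstacle is that a single column of $\matB$ has components in both invariant factors at once, so this ``spread the invertible part, pack the nilpotent part'' scheduling cannot be applied to the columns coordinatewise. I would handle it by verifying, via controllability, that the projections of the columns of $\matB$ onto each invariant factor again yield a controllable pair there, building the nilpotent-block generators at their unavoidable powers $1,2,\dots$ (at most $N-\rank{\matA}$ active at once) and filling the invertible factor at powers beyond the nilpotency index; extra coupled components only enlarge the span and are harmless. Taking $K$ larger than the nilpotency index plus $\rank{\matA}$ then produces a single pattern of sparsity at most $s$ whose span is $\bbR^N$, which closes the sufficiency direction.
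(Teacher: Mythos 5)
First, a bookkeeping point: this paper does not prove \Cref{thm:sparse} --- it is quoted as a preliminary from \cite[Theorem~1]{joseph2019controllability} --- so there is no in-paper proof to compare against line by line. Judged on its own, your outline is essentially sound, and it closely parallels the machinery the paper \emph{does} develop for the harder nonnegative analogue (\Cref{lem:jordan} and the argument in \Cref{app:necc_suff}): you split $\bbR^N$ into the nilpotent and invertible $\matA$-invariant factors, defer the invertible factor to arbitrarily high powers one column at a time, and pack the nilpotent factor into its unavoidable low powers --- exactly the roles played by $\matP_{(0)}$ versus the $\matP_{(i)}$ in \Cref{lem:jordan}, with your bound ``number of Jordan blocks at $0$'' being the paper's $r_i\leq N-\rank{\matA}$ from \eqref{eq:ri_defn}. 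Your reduction to a single support pattern via ``a real vector space is never a finite union of proper subspaces'' is a clean touch specific to the linear-span setting; it has no analogue for cones (a subspace genuinely can be a finite union of proper subcones, which is why the paper needs the Carath\'eodory-type \Cref{lem:union} instead). Both of your necessity arguments are correct.

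The one place the sketch is thinner than a proof is the packing step for the nilpotent factor. You assert that at most $\dim\ker\matA=N-\rank{\matA}$ directions are ``forced at any one power,'' but you also need that this many actual \emph{columns of $\matB$} (not arbitrary vectors of the reachable space) realize those directions at power $j$. The precise statement: with $V_0=\ker(\matA^N)$, $\matN=\matA|_{V_0}$, and $\matB_0$ the projection of $\matB$ onto $V_0$, controllability of the projected pair gives $\matN^jV_0\subseteq \matN^j\range{\matB_0}+\matN^{j+1}V_0$, so the columns of $\matN^j\matB_0$ span the quotient $\matN^jV_0/\matN^{j+1}V_0$, whose dimension (the number of Jordan blocks at $0$ of size exceeding $j$) is at most $N-\rank{\matA}\leq s$; selecting columns that form a basis of each quotient and taking the union over $j$ spans $V_0$. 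With that inserted, and with the observation that the high-power picks lie entirely in $\range{\matA^N}$ and span it while the low-power picks cover $\bbR^N$ modulo $\range{\matA^N}$ (so the cross-factor coupling is indeed harmless), your construction closes the sufficiency direction.
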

\renewcommand{\thetheorem}{\arabic{theorem}}
\setcounter{theorem}{0}
In the next section, we present the main result of the paper and the insights that it yields.
\section{Necessary and Sufficient Conditions}\label{sec:necc_suff}
The section presents the necessary and sufficient conditions for controllability of the system in \eqref{eq:sys} under the constraint $\Omega=\oms$ defined in \eqref{eq:constraint}. The constraint $\Omega=\oms$ is more restrictive than both the nonnegativity constraint in \Cref{thm:nonnegative} and the sparsity constraint in \Cref{thm:sparse}. Thus, the conditions of \Cref{thm:nonnegative} and \Cref{thm:sparse} are necessary for the case when $\Omega=\oms$. The next result shows that those conditions are not only necessary but also sufficient for controllability under the constraint $\Omega=\oms$.
\begin{theorem}\label{thm:necc_suff}
Suppose that the set of admissible vectors $\Omega$ is given by \eqref{eq:constraint} with $s>0$. Then, the system $(\matA,\matB)$ defined in \eqref{eq:sys} is controllable if and only if the following conditions hold:
\begin{enumerate}[label=(\roman*)]
\item \label{con:control}$\nexists \lb \lambda\in\bbC,\vecz\neq\zero\rb$ such that $\vecz\tran\matA=\lambda\vecz\tran$ and $\vecz\tran\matB=\zero$.
\item \label{con:nonnegative} $\nexists\lb\lambda\geq 0,\vecz\neq\zero\rb$ such that $\vecz\tran\matA=\lambda\vecz\tran$ and $\vecz\tran\matB\leq 0$.
\item \label{con:sparse} The sparsity $s\geq N-\rank{\matA}$.
\end{enumerate}
\end{theorem}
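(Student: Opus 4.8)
The plan is to prove necessity directly from the two special cases and to establish sufficiency by induction on $\rank{\matA}$, peeling off at each stage the component of the target that lies outside $\range{\matA}$. Necessity is immediate: the set $\oms$ in \eqref{eq:constraint} is contained both in the nonnegative-control set and in the $s$-sparse-control set, so controllability under $\oms$ forces controllability under each relaxed constraint separately, and conditions \ref{con:control}--\ref{con:sparse} are then exactly what \Cref{thm:nonnegative} and \Cref{thm:sparse} return. All the work is in sufficiency, for which I would fix a target $\vecy=\vecx_{\mathrm{final}}-\matA^K\vecx_0$ and exhibit, through \eqref{eq:combinatorial}, a horizon and supports $\lv\calS_k\rv\le s$ realizing $\vecy$ as a nonnegative, stage-sparse combination of the columns of $\matA^{K-k}\matB$.

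The inductive step would exploit the orthogonal splitting $\bbR^N=\range{\matA}\oplus\ker{\matA\tran}$ together with the observation that every $\matA^{j}\matB_i$ with $j\ge1$ lies in $\range{\matA}$, so only the last-stage columns $\matB_i$ can reach the $d:=N-\rank{\matA}$ dimensional space $\ker{\matA\tran}$. The $\lambda=0$ instance of \ref{con:nonnegative} states precisely that no nonzero $\vecz\in\ker{\matA\tran}$ has $\vecz\tran\matB\le\zero$, i.e.\ the projections of the columns of $\matB$ onto $\ker{\matA\tran}$ positively span that space. Since $d\le s$ by \ref{con:sparse}, the conical form of Carath\'eodory's theorem lets me match the $\ker{\matA\tran}$-component of $\vecy$ with a nonnegative input $\vecu$ supported on at most $d\le s$ coordinates, after which the residual $\vecy-\matB\vecu$ lies in $\range{\matA}$.

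I would then recurse on the restricted pair $\lb\matA',\matB'\rb=\lb\matA|_{\range{\matA}},\matA\matB\rb$, whose sparse--nonnegative reachable set on $\range{\matA}$ is exactly $\matA$ applied to that of $\lb\matA,\matB\rb$, so reaching the residual is equivalent to reaching it for the smaller system. Three transfer facts make this legitimate and should be routine: condition \ref{con:control} survives because $\sum_{j\ge1}\range{\matA^{j}\matB}=\matA\bbR^N=\range{\matA}$; condition \ref{con:sparse} survives because $\rank{\matA}-\rank{\matA^2}\le N-\rank{\matA}\le s$ by the monotonicity of the successive rank drops $\rank{\matA^{k}}-\rank{\matA^{k+1}}$; and condition \ref{con:nonnegative} survives for every $\lambda\ge0$. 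For $\lambda>0$ a left eigenvector $\vecz'$ of $\matA'$ lifts, via the block-lower-triangular form of $\matA\tran$ on $\range{\matA}\oplus\ker{\matA\tran}$, to a genuine left eigenvector $\vecz$ of $\matA$ with $\vecz\tran\matB\le\zero$; for $\lambda=0$ the vector $\matA\tran\vecz'$ itself lies in $\ker{\matA\tran}\setminus\lc\zero\rc$ and satisfies $\lb\matA\tran\vecz'\rb\tran\matB=\vecz'\tran\matB'\le\zero$, so in either case a violation for $\lb\matA',\matB'\rb$ contradicts \ref{con:nonnegative} for $\lb\matA,\matB\rb$. Because $\rank{\matA'}<\rank{\matA}$ whenever $\matA$ is singular, the recursion descends to the eventual range $\range{\matA^{N}}$, on which $\matA$ is invertible.

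The main obstacle is the base case in which $\matA$ is invertible, where \ref{con:nonnegative}--\ref{con:sparse} collapse to the single requirement $s\ge1$ and no dimension is peeled off. Here I would start from \Cref{thm:nonnegative}, which guarantees that $\lc\matA^{j}\matB_i:j\ge0,\ i\in\{1,\dots,m\}\rc$ positively spans $\bbR^N$, and reduce a representation of $\vecy$ to at most $N$ columns by conical Carath\'eodory. The difficulty is that these columns may concentrate in a single stage, breaking the sparsity budget. The key enabling fact is that, since $\matA$ is invertible, each tail family satisfies $\posp\lc\matA^{j}\matB_i:j\ge L\rc=\matA^{L}\bbR^N=\bbR^N$ for every $L$, so any clustered column can be re-expressed using columns drawn from arbitrarily high, previously unused stages. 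Turning this redistribution into a terminating procedure that leaves at most $s$ active inputs at every instant is the crux of the argument and the step I expect to demand the most care; once it is in place, concatenating the stage-wise supports yields the finite horizon required by \eqref{eq:combinatorial} and closes the induction.
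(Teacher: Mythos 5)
Your necessity argument and the overall architecture of the sufficiency proof are workable, and your route is genuinely different from the paper's: you peel off $\ker\{\matA\tran\}$ one layer at a time and recurse on $\lb\matA|_{\range{\matA}},\matA\matB\rb$, whereas the paper performs a one-shot decomposition via the Jordan form (its \Cref{lem:jordan}), splitting the target into a component governed by the invertible Jordan core $\matJ$ and components $\calC_i$ killed by powers of $\matA$. Both routes ultimately rest on the same two ingredients: (i) a conical Carath\'eodory step showing that a $d$-dimensional positively spanned subspace with $d\leq N-\rank{\matA}\leq s$ can be hit by a single $s$-sparse nonnegative input (the paper's \Cref{lem:union}; your last-stage matching of the $\ker\{\matA\tran\}$-component), and (ii) the statement that for \emph{nonsingular} $\matA$, nonnegative controllability already implies nonnegative $s$-sparse controllability for every $s>0$ (the paper's \Cref{lem:nonzero}; your base case). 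You have (i); you do not have (ii).

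That base case is a genuine gap, not a routine verification, and you say so yourself (``the crux of the argument and the step I expect to demand the most care''). The difficulty you identify is real: conical Carath\'eodory applied to $\lc\matA^j\matB_i\rc$ selects up to $N$ columns with no control over how many land in the same time stage, and ``re-expressing a clustered column using higher unused stages'' is not obviously a terminating procedure, since each re-expression can itself introduce new clusters. The paper closes exactly this hole with a different device: partition $\{1,\ldots,m\}$ into $\tilde K$ index sets of size $s$, assign each set its own block of $\tilde N$ consecutive time instants, and collect the resulting columns into the augmented input matrix $\matB^*$ of \eqref{eq:B*defn}. The $s$-sparse nonnegative reachable set then contains the unconstrained nonnegative reachable set of $(\matA,\matB^*)$, and---because $\matA$ is nonsingular, so every eigenvalue $\lambda$ in \Cref{thm:nonnegative} is nonzero and every $\lambda\geq 0$ is strictly positive---the sign pattern of $\vecz\tran\matB^*$ in \eqref{eq:zerolambda} matches that of $\vecz\tran\matB$, so $(\matA,\matB^*)$ inherits nonnegative controllability from $(\matA,\matB)$. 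Some such argument (or a proof that your redistribution terminates) is indispensable before your induction can close. Two smaller points: your termination claim ``$\rank{\matA'}<\rank{\matA}$ whenever $\matA$ is singular'' is false in general (take $\matA$ diagonal with entries $1,0$); what actually terminates the recursion is that the ambient dimension drops from $N$ to $\rank{\matA}<N$ at each singular step. And in the $\lambda=0$ transfer you should note that $\matA\tran$ is injective on $\range{\matA}=\ker\{\matA\tran\}^{\perp}$, which is what guarantees $\matA\tran\vecz'\neq\zero$.
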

\begin{proof}
See \Cref{app:necc_suff}.
\end{proof}
The immediate observations from the above result are as follows:
\begin{itemize}
\item \Cref{thm:necc_suff} implies that any $s-$sparse controllable system is nonnegative $s-$sparse controllable if and only if it satisfies Condition \ref{con:nonnegative} of \Cref{thm:necc_suff}. This is evident from \Cref{thm:sparse}. Similarly, from \Cref{thm:nonnegative}, if a linear system  is controllable using nonnegative control inputs, it is  nonnegative $s-$sparse controllable if and only if $s\geq N-\rank{\matA}$. Therefore, the extra condition for ensuring the sparsity of the control inputs is independent of the input matrix $\matB$.
\item For the special case when $s=m$, \Cref{thm:necc_suff} reduces to \Cref{thm:nonnegative}, as expected. Also, when $m=1$, the notion of sparse controllability and controllability are same, and hence, \Cref{thm:necc_suff} reduces to the well-known result of Evans and Murthy~\cite[Theorem 1]{evans1977controllability}.
\item If the linear system in \eqref{eq:sys} is nonnegative $s-$sparse controllable, it is also nonpositive $s-$sparse controllable. This is because if the system given by $(\matA,\matB)$ satisfies the conditions of \Cref{thm:necc_suff}, the system given by $(\matA,-\matB)$ also satisfies those conditions. In particular, $\nexists\vecz\neq\zero$ such that $\vecz\tran\matA=\lambda\vecz\tran$ and $\vecz\tran\matB\geq 0$, for some $\lambda\geq 0$. This follows since for every $\vecz$ such that $\vecz\tran\matA=\lambda\vecz\tran$, we have $(-\vecz)\tran\matA=\lambda(-\vecz)\tran$.
\item For the linear system in \eqref{eq:sys}, controllability using nonnegative and sparse inputs with a common support (i.e., the positive entries of all control inputs coincide) holds only if $s\geq N-\rank{\matA}$. This follows from Condition \ref{con:sparse} of \Cref{thm:necc_suff} because here, the control signals are more restricted than the setting in \Cref{thm:necc_suff}.
\end{itemize}

We obtain the following interesting corollary from \Cref{thm:necc_suff}.
\begin{cor}\label{cor:sbound}
If any system as given in \eqref{eq:sys} is nonnegative controllable, then it is nonnegative $s$-sparse controllable if $s\geq m-1$.
\end{cor}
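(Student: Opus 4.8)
The plan is to reduce the claim entirely to Condition \ref{con:sparse} of \Cref{thm:necc_suff}. Since the system is assumed nonnegative controllable, \Cref{thm:nonnegative} guarantees that it satisfies Conditions \ref{con:control_non} and \ref{con:nonnegative_non}, and these coincide verbatim with Conditions \ref{con:control} and \ref{con:nonnegative} of \Cref{thm:necc_suff}. Hence, to invoke \Cref{thm:necc_suff} and conclude nonnegative $s$-sparse controllability, the only outstanding task is to verify $s \geq N - \rank{\matA}$. As $s \geq m-1$ by hypothesis, it therefore suffices to establish the purely algebraic bound $N - \rank{\matA} \leq m-1$ whenever the nonnegative controllability conditions hold.

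To this end, I would work with the left null space $\calK = \lc \vecz \in \bbR^N : \vecz\tran\matA = \zero \rc$, whose dimension equals $N - \rank{\matA}$ because the left and right nullities of a square matrix agree. The key idea is to read Conditions \ref{con:control_non} and \ref{con:nonnegative_non} at the single eigenvalue $\lambda = 0$, where the relation $\vecz\tran\matA = \lambda\vecz\tran$ is precisely $\vecz \in \calK$. At $\lambda=0$, Condition \ref{con:control_non} asserts that no nonzero $\vecz \in \calK$ satisfies $\vecz\tran\matB = \zero$, i.e.\ the linear map $\vecz \mapsto \matB\tran\vecz$ is injective on $\calK$. Meanwhile Condition \ref{con:nonnegative_non} asserts that no nonzero $\vecz \in \calK$ satisfies $\vecz\tran\matB \leq \zero$, i.e.\ every nonzero image $\matB\tran\vecz$ with $\vecz \in \calK$ possesses at least one strictly positive entry.

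Next, let $\calV = \lc \matB\tran\vecz : \vecz \in \calK \rc \subseteq \bbR^m$ denote the image of $\calK$ under this map. By the injectivity just noted, $\dim \calV = \dim \calK = N - \rank{\matA}$. On the other hand, the vector $-\one \in \bbR^m$ is nonzero and has no positive entry, so the second reading forces $-\one \notin \calV$; consequently $\calV$ is a \emph{proper} subspace of $\bbR^m$, which gives $\dim \calV \leq m-1$. Chaining these two facts yields $N - \rank{\matA} = \dim \calV \leq m-1 \leq s$, which is exactly Condition \ref{con:sparse}. With all three conditions of \Cref{thm:necc_suff} in force, the system is nonnegative $s$-sparse controllable, as claimed.

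I expect the proof to be short, so the only real ``obstacle'' is conceptual rather than computational: recognizing that the corollary's sparsity threshold is governed solely by the null space of $\matA$, and that evaluating the two nonnegative controllability conditions at $\lambda=0$ simultaneously pins the nullity (via injectivity) and excludes the all-negative direction $-\one$ (via the sign condition), thereby capping $N-\rank{\matA}$ at $m-1$. Everything else is elementary linear algebra; the hypothesis $s>0$ required by \Cref{thm:necc_suff} is harmless, since $s \geq m-1$ with $m\geq 1$.
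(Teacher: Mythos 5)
Your proof is correct, and it reaches Condition \ref{con:sparse} of \Cref{thm:necc_suff} by a genuinely different route from the paper. The paper argues on the primal (reachability) side: it projects the identity $\vecx=\sum_{k}\matA^{K-k}\matB\vecu_k$ onto the null space $\calN$ of $\matA\tran$ to conclude $\calN\subseteq\posp\lc\matA^{\dagger}\matB\rc$, and then invokes an external result --- that positively spanning a $d$-dimensional subspace requires at least $d+1$ vectors --- to obtain $m\geq N-\rank{\matA}+1$. You instead argue on the dual (PBH) side: reading Conditions \ref{con:control_non} and \ref{con:nonnegative_non} of \Cref{thm:nonnegative} at the single eigenvalue $\lambda=0$, you show that $\matB\tran$ maps the left null space of $\matA$ injectively into a subspace of $\bbR^m$ that omits $-\one$ and is therefore proper, which caps $N-\rank{\matA}$ at $m-1$. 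Your version is more self-contained, replacing the cited positive-spanning-set bound with the elementary observation that a subspace missing a single vector cannot be all of $\bbR^m$; the paper's version makes the geometric content explicit (the $m$ columns of $\matA^{\dagger}\matB$ must positively span a space of dimension $N-\rank{\matA}$). The two are dual faces of the same fact, and both are valid. One small caution: your closing remark that $s\geq m-1$ with $m\geq 1$ makes the hypothesis $s>0$ harmless fails when $m=1$ (then $s=0$ is permitted); this degenerate case is glossed over by the paper's proof as well, so it is a shared, not a new, blemish.
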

\begin{proof}
See \Cref{app:sbound}.
\end{proof}

\subsection{Computational complexity}
In this subsection, we discuss the computational complexity of the controllability test given in \Cref{thm:necc_suff}:
\begin{itemize}
\item To check Condition \ref{con:control} of \Cref{thm:necc_suff}, we solve for all eigenvalues of $\matA$ and check if $\rank{\begin{bmatrix}
\lambda\eye-\matA & \matB
\end{bmatrix}}=N$, for each eigenvalue $\lambda$. So, the complexity of this step is polynomial in $N$ and $m$.
\item To check Condition \ref{con:nonnegative}  of \Cref{thm:necc_suff}, for every eigenvalue $\lambda\geq0$ of $\matA$, we find a set of linearly independent eigenvectors $\lc\vecz_i^{(\lambda)}\rc_{i=1}^{g_{\lambda}}$ corresponding to $\lambda$,  where $g_{\lambda}$ denotes its geometric multiplicity. 
Now,  Condition \ref{con:nonnegative}  of \Cref{thm:necc_suff} can be verified by checking if there exists $\vecrho\in\bbR^{{g_{\lambda}}}$ such that $\vecrho\tran\matZ\tran\matB\leq \zero$. Here, $\matZ\in\bbR^{N\times g_{\lambda}}$ is a matrix formed by the vectors $\lc\vecz_i^{(\lambda)}\rc_{i=1}^{g_{\lambda}}$.  The feasibility of the set of linear inequalities $\matB\tran\matZ\vecrho\leq \zero$ can be verified by solving the following (dummy) linear programming problem:
\begin{equation}
\underset{\rho\in\bbR^{g_{\lambda}}}{\max} 0 \; \text{ subject to }\; \matB\tran\matZ\vecrho\leq \zero.
\end{equation}
Thus, the complexity of verification of  Condition \ref{con:nonnegative}  of \Cref{thm:necc_suff} is also polynomial in $N$ and $m$.
\item The complexity to verify Condition \ref{con:sparse}  of \Cref{thm:necc_suff} is independent of the system dimension: $\calO(1)$.
\end{itemize}
Therefore, the overall complexity of our controllability test is non-combinatorial unlike the verification of condition \eqref{eq:combinatorial}. Moreover, the complexity is independent of the sparsity level $s$ whereas the complexity of the verification of condition \eqref{eq:combinatorial} is a function of $\binom{N}{s}$.
\subsection{Comparison with sparse-controllability}
From \Cref{thm:sparse} and \Cref{thm:necc_suff}, we see some similarities between sparse-controllability and nonnegative sparse controllability. 
\begin{itemize}
\item \emph{Reversible systems: }If the system is reversible, i.e., state-transition matrix $\matA$ is nonsingular, it is $s-$sparse-controllable for any $0 < s \leq m$ if and only if it is controllable ($\Omega=\bbR^m$). Similarly, it is nonnegative $s-$sparse-controllable for any $0 < s \leq m$ if and only if it is controllable using nonnegative controls ($\Omega=\bbR^m_+$).
\item \emph{Minimal control: } Suppose that the system defined by the matrix pair $(\matA,\matB_{\calS})$ is controllable under the nonnegativity constraint $\Omega=\bbR^m_+$, for some index set $\calS\subseteq \{1,2,\ldots,m\}$. Then, the system is nonnegative $s-$sparse-controllable. In particular, if $\rank{\matB}\leq s$, controllability under the constraint $\Omega=\bbR_+^N$ implies nonnegative $s-$sparse controllability. Sparse-controllability also posses a similar property.
\item \emph{Change of basis: }If a system as given in \eqref{eq:sys} is controllable using inputs that are $s-$sparse under the canonical basis, it is controllable using inputs that are $s-$sparse under any  basis $\Phi\in\bbR^{m\times m}$. However, this property does not hold for nonnegative sparse controllability. For example, consider the following system:
\begin{align}
\matA=\begin{bmatrix}
-1 & 0 & 0\\
0 &-1&0\\
0&0&0
\end{bmatrix}, \matB&=\begin{bmatrix}
1 & 0&0 &0\\
0&1&0&0\\
0&0&1&-1
\end{bmatrix} \\
\matPhi=\begin{bmatrix}
1&0&0&0\\
0&1 & 1&0\\
0&-1&-1&0\\
0&0&0&1
\end{bmatrix}.
\end{align}
The system ($\matA,\matB$) is nonnegative $1-$sparse controllable, but the system ($\matA,\matB\matPhi$) is not nonnegative $1-$sparse controllable. This is because the eigenvector $\begin{bmatrix}
0 & 0 & 1
\end{bmatrix}\tran$ corresponding the eigenvalue 0 of $\matA$ does not satisfy Condition \ref{con:nonnegative} of \Cref{thm:necc_suff}.
\end{itemize}
\section{Conclusions}
This paper characterized controllability of discrete-time linear systems subject to sparsity and nonnegativity constraints on the inputs. The characterizations are in terms of algebraic conditions that are similar to the classical results for unconstrained and nonnegative input-constrained linear systems. We showed that the complexity of the controllability test is polynomial in the dimensions of the system. Extending our results to the other notions of controllability like output controllability, approximate controllability, etc., is a related direction of research. Also, this paper dealt with the theoretical question of the existence of a sequence of nonnegative sparse vectors that ensure controllability. Developing computationally efficient algorithms to find the energy-optimal sequence of such nonnegative sparse vectors can be an interesting avenue for future work. 
\appendices
\crefalias{section}{appendix}

\section{Proof of \Cref{thm:necc_suff}}\label{app:necc_suff}
The necessity of the three conditions of \Cref{thm:necc_suff} is straightforward from \Cref{thm:nonnegative} and \Cref{thm:sparse}. Consequently, we need to show that the conditions of \Cref{thm:necc_suff} are sufficient for nonnegative sparse controllability. This proof relies on the following three lemmas.
\begin{lemma}\label{lem:jordan}
For any matrix $\matA\in\bbR^{N\times N}$, there exists integers $0\leq q, n\leq N$ and an invertible matrix $\matP\in\bbR^{N\times N} $ such that
\begin{equation}
\matP = \begin{bmatrix}
\matP_{(0)}\in\bbR^{q \times N}\\
\zero \in\bbR^{N-q\times N}
\end{bmatrix} + \sum_{i=1}^n\matP_{(i)},
\label{eq:jordan1}
\end{equation}
where $\matP_{(0)}$ and $\matP_{(i)}\in\bbR^{N\times N}$ satisfies the following:
\begin{enumerate}[label=(C\arabic*)]
\item The rank of $\matP_{(0)}$ is $q$, and there exists a nonsingular matrix $\matJ\in\bbR^{q\times q}$ such that $\matP_{(0)}\matA^k = \matJ^k\matP_{(0)}$.\label{con:jordan1} 
\item \label{con:jordan2} For all values of $1\leq i\leq n$, 
\begin{align}
\rank{\matP_{(i)}} &= \rank{\matP_{(i)}\matA^{i-1}} \leq N-\rank{\matA}\label{eq:jordan2}\\
\matP_{(i)}\matA^k&=\zero,  k\geq i.\label{eq:jordan3}
\end{align}
\end{enumerate}
\end{lemma}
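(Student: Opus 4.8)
The plan is to read the statement as the Fitting (generalized-eigenspace) decomposition of $\matA$, recorded row-by-row inside the invertible matrix $\matP$, separating the part of $\matA$ that acts invertibly from the part that acts nilpotently. Since each row $\vecp\tran$ of $\matP$ interacts with $\matA$ through right multiplication, $\vecp\tran\matA = (\matA\tran\vecp)\tran$, the natural operator to analyze is $\matA\tran$. First I would split $\bbR^N = U \oplus W$ with $U = \mathrm{range}((\matA\tran)^N)$ and $W = \ker((\matA\tran)^N)$; both are $\matA\tran$-invariant, $\matA\tran$ restricts to a nonsingular map on $U$ and to a nilpotent map on $W$. Set $q = \dim U$, so $\dim W = N-q$.

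For the invertible part, take any basis of $U$ and let $\matP_{(0)}\in\bbR^{q\times N}$ be the matrix whose rows are the transposes of these basis vectors, placed as the top $q$ rows of $\matP$. Invariance of $U$ under $\matA\tran$ yields a nonsingular $\matJ\in\bbR^{q\times q}$ with $\matP_{(0)}\matA = \matJ\matP_{(0)}$, and iterating gives $\matP_{(0)}\matA^k = \matJ^k\matP_{(0)}$ together with $\rank{\matP_{(0)}} = q$, which is exactly \ref{con:jordan1}.

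The heart of the argument is the nilpotent part. Since $\matA\tran|_W$ is nilpotent I would choose a real Jordan basis of $W$ organized into chains $v_j^{(1)},\dots,v_j^{(\ell_j)}$ for $1\le j\le p$, with $\matA\tran v_j^{(t)} = v_j^{(t-1)}$ and the convention $v_j^{(0)} := \zero$. The number of chains is $p = \dim\ker(\matA\tran) = N - \rank{\matA}$, and the index of nilpotency is $n = \max_j \ell_j \le N$. I then define $\matP_{(i)}\in\bbR^{N\times N}$ to be the matrix whose nonzero rows are the level-$i$ vectors $(v_j^{(i)})\tran$ over all chains with $\ell_j \ge i$, placed in disjoint row slots among the bottom $N-q$ rows. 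Because $\matP_{(i)}\matA^k$ has rows $(v_j^{(i-k)})\tran$, it vanishes for $k\ge i$, giving $\matP_{(i)}\matA^k=\zero$; moreover $\matP_{(i)}\matA^{i-1}$ has rows equal to the eigenvectors $(v_j^{(1)})\tran$, which are linearly independent, so $\rank{\matP_{(i)}} = \rank{\matP_{(i)}\matA^{i-1}}$ equals the number of chains of length at least $i$, which is at most $p = N-\rank{\matA}$. This is precisely \ref{con:jordan2}.

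Finally, since the chain vectors form a basis of $W$ and the $\matP_{(0)}$-rows a basis of $U$, the full collection is a basis of $\bbR^N$, so the assembled $\matP$ (the top block plus the disjointly placed $\matP_{(i)}$) is invertible, with $0\le q,n\le N$. The main obstacle is the bookkeeping of the transpose/row-space correspondence together with the verification that grouping a real nilpotent Jordan basis by level produces exactly the two matching ranks and the chain-count bound $N-\rank{\matA}$; the linear independence of same-level vectors across distinct chains, inherited from the Jordan basis, is what forces the two ranks in \ref{con:jordan2} to coincide. Degenerate cases ($q=0$ when $\matA$ is nilpotent, or $W=\{\zero\}$ when $\matA$ is nonsingular) are handled by the same construction with an empty family of blocks.
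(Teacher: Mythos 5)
Your proof is correct and takes essentially the same route as the paper's: both rest on the Jordan/Fitting decomposition separating the part of $\matA\tran$ that acts invertibly from the part that acts nilpotently, with the nonzero rows of each $\matP_{(i)}$ being exactly the level-$i$ vectors of the Jordan chains for the eigenvalue $0$ (the paper extracts them as suitably chosen rows of the Jordan transformation matrix $\matP$, you build them directly as a Jordan basis of $\ker((\matA\tran)^N)$), and the rank identities and the bound $N-\rank{\matA}$ follow in both cases from counting chains of length at least $i$. A minor bonus of your phrasing is that taking an arbitrary basis of the invertible-part subspace and letting $\matJ$ be the matrix of $\matA\tran$ restricted to it sidesteps the real-versus-complex Jordan form issue for nonzero eigenvalues that the paper's explicit Jordan canonical form glosses over.
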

\begin{proof}
See \Cref{app:jordan}.
\end{proof}
\begin{lemma}\label{lem:nonzero}
Let  $\matA$ be nonsingular. If the system $(\matA,\matB)$ defined in \eqref{eq:sys} is nonnegative controllable, then the system is nonnegative $s$-sparse controllable, for any $s>0$.
\end{lemma}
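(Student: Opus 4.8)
The plan is to reduce to the extreme case $s=1$ and then show that the set of states reachable with $1$-sparse nonnegative controls already fills $\bbR^N$. First I would note that $\oms$ is increasing in $s$, so $\Omega_{1+}\subseteq\oms$ for every $s\ge 1$; hence it suffices to treat $s=1$, since any $1$-sparse control is also $s$-sparse. Reading the reachability identity \eqref{eq:sys_concat} over a horizon $K$, the attainable set of differences $\vecx_K-\matA^K\vecx_0$ under $1$-sparse nonnegative inputs is $\bigcup_{\{j_k\}}\posp\{\matA^{K-k}\matB_{j_k}:1\le k\le K\}$. Since the exponents $K-k$ run through $0,1,\dots,K-1$ without repetition, this is exactly the set $D$ of all finite nonnegative combinations $\sum_l\beta_l\matA^l\matB_{j_l}$ of the columns $\matA^l\matB_j$ in which each power $l$ is used at most once. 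I must therefore show $D=\bbR^N$.

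The main tool is a shift/tail property supplied by nonsingularity. Because the system is nonnegative controllable, \Cref{thm:nonnegative} gives $W\triangleq\posp\{\matA^l\matB_j:l\ge 0,\ 1\le j\le m\}=\bbR^N$, the unconstrained nonnegative reachable cone. Since $\matA$ is invertible, for any $L\ge 0$ one has $\{\matA^l\matB_j:l\ge L\}=\matA^L\{\matA^l\matB_j:l\ge 0\}$, and as the positive span commutes with the linear isomorphism $\matA^L$,
\[
\posp\{\matA^l\matB_j:l\ge L\}=\matA^L\,W=\matA^L\bbR^N=\bbR^N .
\]
Thus full spanning power remains available using only arbitrarily large, hence fresh, powers $l$.

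Using this, I would build a finite family $\matA^{l_1}\matB_{j_1},\dots,\matA^{l_p}\matB_{j_p}$ with pairwise distinct powers $l_1<\cdots<l_p$ that positively spans $\bbR^N$; then $D\supseteq\posp\{\matA^{l_1}\matB_{j_1},\dots,\matA^{l_p}\matB_{j_p}\}=\bbR^N$ and we are done. The construction is greedy: maintain a set $S$ of already-chosen columns at distinct powers; while $\posp\{S\}\ne\bbR^N$, its polar cone contains a unit vector $\vecu$ with $\vecu\tran\vecv\le 0$ for all $\vecv\in\posp\{S\}$, and by the tail property applied with $L>\max\{l_t\}$ there is a fresh-power column $\matA^{l}\matB_{j}$ with $\vecu\tran\matA^{l}\matB_{j}>0$, which strictly enlarges the cone; I add it to $S$ and repeat.

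The hard part is exactly the distinct-power requirement. In the unconstrained cone $W$ a target may only be representable by activating several input columns at the same time step (the same power $l$), which $1$-sparsity forbids; re-expressing such a collision through columns at other powers changes the vectors involved, because $\matA^{l}\matB_j\ne\matA^{l'}\matB_j$ for $l\ne l'$, so the redistribution must be carried out globally rather than term by term. Consequently the crux is to certify that the greedy enlargement terminates, equivalently that finitely many distinct-power columns already cover every direction of the unit sphere $S^{N-1}$. I expect to control this via compactness of $S^{N-1}$ together with the fact that the tail $\{\matA^l\matB_j:l\ge L\}$ positively spans for every $L$, which guarantees that each uncovered direction can always be captured by a column at a fresh power; making this termination argument rigorous, for instance by bounding the number of columns needed by the size of a positive basis of $\bbR^N$, is where the real work lies.
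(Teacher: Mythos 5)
Your setup is sound up to the final step: the reduction to $s=1$, the identification of the $1$-sparse nonnegative reachable set with the set $D$ of nonnegative combinations of the columns $\matA^l\matB_j$ in which each power $l$ is used at most once, the equivalence of nonnegative controllability with $\posp\{\matA^l\matB_j:l\geq 0\}=\bbR^N$, and the tail property $\posp\{\matA^l\matB_j:l\geq L\}=\matA^L\bbR^N=\bbR^N$ are all correct. But the argument is incomplete, and the missing piece is exactly the load-bearing one. Strictly enlarging a finitely generated cone at every greedy step does not force the process to terminate at $\bbR^N$: in $\bbR^2$, successively adding generators at angles $0, 170^\circ, 175^\circ, 177.5^\circ,\dots$ strictly enlarges the cone forever without it ever becoming the whole plane, and nothing in your selection rule (any $\vecu$ in the polar cone, then any fresh-power column with $\vecu\tran\matA^l\matB_j>0$) rules out this behavior. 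The natural repair --- cover the unit sphere by the open sets $\lc\vecu:\vecu\tran\matA^l\matB_j>0\rc$ and extract a finite subcover by compactness --- does yield a finite positively spanning subfamily, but one that may reuse the same power $l$ for several channels $j$; re-expressing a colliding term through columns at higher powers introduces new collisions, so the decollision does not obviously terminate either. You have correctly located the crux, but as written this is a genuine gap rather than a routine detail.

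For comparison, the paper sidesteps this primal difficulty by arguing in the dual. Assuming the system is not nonnegative $s$-sparse controllable, it partitions the channels into size-$s$ blocks $\calS_1,\dots,\calS_{\tilde{K}}$ and schedules each block for $\tilde{N}$ consecutive time steps; the resulting reachability cone is that of an auxiliary pair $(\matA,\matB^*)$ with $\matB^*=\ls\matA^{(\tilde{K}-1)\tilde{N}}\matB_{\calS_1}\ \cdots\ \matB_{\calS_{\tilde{K}}}\rs$, which is therefore not nonnegative controllable. \Cref{thm:nonnegative} then supplies a left eigenvector $\vecz$ with $\vecz\tran\matB^*=\zero$ or $\vecz\tran\matB^*\leq\zero$, and since $\matA$ is nonsingular the scalars $\lambda^{(\tilde{K}-i)\tilde{N}}$ are nonzero (and positive in the inequality case), so the same $\vecz$ certifies that $(\matA,\matB)$ itself is not nonnegative controllable. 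To salvage your direct construction you would need an argument of comparable strength in place of the greedy termination claim.
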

\begin{proof}
See \Cref{app:nonzero}.
\end{proof}
\begin{lemma}\label{lem:union}
Let $\calZ\subseteq \bbR^N$ be a subspace of dimension $1\leq d\leq N$. Also, let $\matZ\in\bbR^{N\times m}$ where $m > d$ is such that $\calZ=\posp\{\matZ\}$. Then, the following relation holds:
\begin{equation}\label{eq:union}
\calZ=\bigcup_{\calS\subseteq\{1,2,\ldots,m\},\lv\calS\rv=d}\posp\{\matZ_\calS\}.
\end{equation}
\end{lemma}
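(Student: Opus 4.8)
The plan is to establish the two inclusions separately. One direction is immediate: for every index set $\calS$ with $\lv\calS\rv=d$, the columns of $\matZ_\calS$ form a subset of the columns of $\matZ$, so $\posp\{\matZ_\calS\}\subseteq\posp\{\matZ\}=\calZ$; taking the union over all such $\calS$ preserves this containment, giving $\bigcup_{\lv\calS\rv=d}\posp\{\matZ_\calS\}\subseteq\calZ$. All the substance of the lemma therefore lies in the reverse inclusion $\calZ\subseteq\bigcup_{\lv\calS\rv=d}\posp\{\matZ_\calS\}$.

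For the reverse inclusion, I would fix an arbitrary $\vecx\in\calZ=\posp\{\matZ\}$ and invoke the conical form of Carath\'eodory's theorem. The first observation is that every column $\matZ_i$ belongs to $\calZ$ (it is the nonnegative combination of itself with coefficient one), so all columns of $\matZ$ lie in the $d$-dimensional subspace $\calZ$. Applying Carath\'eodory's theorem for conical hulls inside $\calZ$, the point $\vecx$ admits a representation $\vecx=\sum_{i\in\calI}\alpha_i\matZ_i$ with coefficients $\alpha_i\geq0$ and with the selected columns $\{\matZ_i:i\in\calI\}$ linearly independent. Since these are linearly independent vectors living in a space of dimension $d$, we must have $\lv\calI\rv\leq d$.

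It then remains to enlarge $\calI$ to an index set of cardinality exactly $d$. Because $m>d\geq\lv\calI\rv$, there are enough indices outside $\calI$, so I can choose $\calS\subseteq\{1,2,\ldots,m\}$ with $\calI\subseteq\calS$ and $\lv\calS\rv=d$. Augmenting the generating set can only enlarge the positive span, hence $\vecx\in\posp\{\matZ_\calI\}\subseteq\posp\{\matZ_\calS\}$, which places $\vecx$ in the union. Combining the two inclusions yields the asserted equality.

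The lemma is not deep, but the one step that must be handled with care is the appeal to the \emph{conical} Carath\'eodory bound: the representing columns must be taken linearly independent so that their count is controlled by $d=\dim\calZ$ rather than by the ambient dimension $N$. This dimension bound is exactly what guarantees that the extension step terminates at cardinality $d$ while remaining inside $\{1,2,\ldots,m\}$; without it the final index set could be forced past $d$ and the covering by $d$-subsets would fail.
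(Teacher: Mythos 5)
Your proof is correct and is in substance the same as the paper's: both arguments reduce an arbitrary nonnegative representation $\vecx=\matZ\vecalpha$, $\vecalpha\geq\zero$, to one supported on at most $d=\rank{\matZ}$ linearly independent columns, you via the conical form of Carath\'eodory's theorem and the paper via the existence of a basic feasible solution to the linear program $\matZ\vecalpha=\vecx$, $\vecalpha\geq\zero$ --- two names for the same fact. Your write-up is, if anything, slightly more complete, since you also spell out the trivial inclusion and the padding of the support set to cardinality exactly $d$.
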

\begin{proof}
See \Cref{app:union}.
\end{proof}

From \eqref{eq:combinatorial}, it is enough to show that under the conditions of \Cref{thm:necc_suff}, for any $\vecx\in\bbR^N$, there exists a positive integer $K<\infty$ and $\lc\vecu_k\in\oms\rc_{k=1}^K$ such that
\begin{equation}\label{eq:basic_con}
\vecx=\sum_{k=1}^{K}\matA^{K-k}\matB\vecu_k.
\end{equation}
Using \Cref{lem:jordan}, we see that \eqref{eq:basic_con}  is equivalent to the following:
\begin{align}\label{eq:basic_ver11}
\begin{bmatrix}
\matP_{(0)}\\
\zero
\end{bmatrix}\vecx&= \begin{bmatrix}
\matP_{(0)}\\
\zero
\end{bmatrix}\sum_{k=1}^K\matA^{K-k}\matB\vecu_k\\
\matP_{(i)}\vecx  &= \matP_{(i)}\sum_{k=1}^K \matA^{K-k}\matB\vecu_k, \;1\leq i\leq n,\label{eq:basic_ver12}
\end{align}
where $\matP_{(0)}\in\bbR^{q\times N}$ and $\matP_{(i)}\in\bbR^{N\times N}$ are as defined in the statement of \Cref{lem:jordan}. Here, the equivalence follows because the sum of \eqref{eq:basic_ver11} and \eqref{eq:basic_ver12} over all values of $i$ gives \eqref{eq:basic_con}. We remove the zero rows of \eqref{eq:basic_ver11} and simplify \eqref{eq:basic_ver12} to obtain the following equivalent equations:
\begin{align}
\matJ^{-n}\matP_{(0)}\lb \vecx-\sum_{k=K-n+1}^K\matA^{K-k}\matB\vecu_k\rb\notag\\
&\hspace{-1cm}= 
\sum_{k=1}^{K-n}\matJ^{K-n-k}\matP_{(0)}\matB\vecu_{k}\label{eq:col_con}\\
\matP_{(i)}\lb \vecx - \sum_{k=K-(i-2)}^{K}\matA^{K-k}\matB\vecu_k\rb\notag\\
&\hspace{-1cm}=   \matP_{(i)} \matA^{i-1}\matB\vecu_{K-i+1},\label{eq:null_con}
\end{align}
where we use Conditions \ref{con:jordan1} and \ref{con:jordan2} of \Cref{lem:jordan} and $\matJ$ is as defined in the statement of \Cref{lem:jordan}. Further, we notice that
\begin{align}
\matJ^{-n}\matP_{(0)}\lb \vecx-\sum_{k=K-n+1}^K\matA^{K-k}\matB\vecu_k\rb&\in\bbR^q\\
\matP_{(i)}\lb \vecx - \sum_{k=K-(i-2)}^{K}\matA^{K-k}\matB\vecu_k\rb&\in\calC_i,
\end{align}
where  $\calC_i$ is the column space of $\matP_{(i)}$. We also notice that the control inputs $\vecu_k$ appearing on the right-hand sides of \eqref{eq:col_con} and \eqref{eq:null_con} for each value of $1\leq i\leq n$ are different.  Then, to prove nonnegative sparse controllbaility, it suffices to prove the following: 
\begin{enumerate}[label=(\alph*)]
\item For any $\vecz\in\bbR^{q}$, there exists a positive integer $K<\infty$ and $\lc\vecu_k\in\oms\rc_{k=1}^{K}$ which satisfy the relation:
\begin{equation}
\vecz=\sum_{k=1}^K\matJ^{K-k}\matP_{(0)}\matB\vecu_k.
\end{equation}
\label{con:col}
\item For all integers $1\leq i\leq n$ and any $\vecz\in\calC_i$, there exists $\vecu\in\oms$ such that 
\begin{equation}
\vecz=\matP_{(i)}\matA^{i-1}\matB\vecu.
\end{equation}  \label{con:null}
\end{enumerate}
In the remainder of this section, we provide the proof the above two statements.

Proving Statement \ref{con:col} is equivalent to showing that under the conditions of \Cref{thm:necc_suff}, the system $(\matJ,\matP_{(0)}\matB)$ is nonnegative $s-$sparse controllable. We prove the negation of this, i.e., we show that if the system $(\matJ,\matP_{(0)}\matB)$ is not nonnegative $s-$sparse controllable, at least one condition of \Cref{thm:necc_suff} is violated. As $\matJ$ is nonsingular, we apply \Cref{lem:nonzero}  to deduce that if the system $(\matJ,\matP_{(0)}\matB)$ is not nonnegative $s-$sparse controllable, it is not nonnegative controllable. Then, \Cref{thm:nonnegative} implies that at least one of the following hold:
\begin{itemize}
\item $\exists\lb\lambda\in\bbC,\vecy\neq\zero\rb$ with $\vecy\tran\matJ=\lambda\vecy\tran$ and $\vecy\tran\matP_{(0)}\matB=\zero$.
\item $\exists\lb\lambda>0,\vecy\neq\zero\rb$ with $\vecy\tran\matJ=\lambda\vecy\tran$ and $\vecy\tran\matP_{(0)}\matB\leq \zero$.
\end{itemize}
However, if $\vecy\tran\matJ=\lambda\vecy\tran$, then from Condition \ref{con:jordan1} of \Cref{lem:jordan} with $k=1$,
\begin{equation}
\vecy\tran\matP_{(0)}\matA=\vecy\tran\matJ\matP_{(0)} = \lambda\vecy\tran\matP_{(0)}.
\end{equation}
Condition \ref{con:jordan1} of \Cref{lem:jordan} also implies that that $\matP_{(0)}$ is full row rank. Consequently, $\vecz=\matP_{(0)}\tran\vecy\neq\zero$.
Thus, $\vecz=\matP_{(0)}\tran\vecy$ is an  eigenvector  of $\matA$, which leads to atleast one of the following conditions:
\begin{itemize}
\item $\exists\lb\lambda\in\bbC,\vecz\neq\zero\rb$ with $\vecz\tran\matA=\lambda\vecz\tran$ and $\vecz\tran\matB=\zero$.
\item $\exists\lb\lambda>0,\vecz\neq\zero\rb$ with $\vecz\tran\matA=\lambda\vecz\tran$ and $\vecz\tran\matB\leq \zero$.
\end{itemize}
Therefore,  if the system $(\matJ,\matP_{(0)}\matB)$ is not nonnegative $s-$sparse controllable, at least one of the conditions of \Cref{thm:necc_suff} is violated. Hence, Statement \ref{con:col} is proved.

Next, we complete the proof of \Cref{thm:necc_suff} by establishing that Statement \ref{con:null} is true when the conditions of \Cref{thm:necc_suff} hold. Statement \ref{con:null} is equivalent to
\begin{equation}\label{eq:con_b}
\calC_i\subseteq \bigcup_{\calS\subseteq\{1,2,\ldots,m\},\lv\calS\rv=s}\posp\lc\matP_{(i)}\matA^{i-1}\matB_\calS\rc.
\end{equation}
Conditions \ref{con:control} and \ref{con:nonnegative} of \Cref{thm:necc_suff} imply that the system $(\matA,\matB)$ is nonnegative controllable. So, for any $\vecx\in \bbR^N$, there exists a positive integer $K<\infty$ and $\lc\vecu_k\in\bbR^N_+\rc_{k=1}^K$ such that 
\begin{equation}\label{eq:nsparse_con}
\vecx=\sum_{k=1}^{K}\matA^{K-k}\matB\vecu_k.
\end{equation}
Multiplying both the sides of \eqref{eq:nsparse_con} with $\matP_{(i)}\matA^{i-1}$, we get 
\begin{equation}\label{eq:reduced_null}
\matP_{(i)}\matA^{i-1}\vecx =   \matP_{(i)} \matA^{i-1}\matB\vecu_{K},
\end{equation}
where we use \eqref{eq:jordan3} of \Cref{lem:jordan}. Also, from \eqref{eq:jordan2}, the columns spaces of $\matP_{(i)}$ and $\matP_{(i)}\matA^{i-1}$ are the same. Since \eqref{eq:reduced_null} holds for any $\vecx\in\bbR^N$, we deduce that 
\begin{equation}
\calC_i\subseteq \posp\lc\matP_{(i)}\matA^{i-1}\matB\rc.
\end{equation}
Now, using \Cref{lem:union}, we obtain
\begin{equation}
\calC_i\subseteq \bigcup_{\calS\subseteq\{1,2,\ldots,m\},\lv\calS\rv=\dim\{\calC_i\}}\posp\lc\matP_{(i)}\matA^{i-1}\matB_\calS\rc,
\end{equation}
where $\dim\{\calC_i\}$ is the dimension of $\calC_i$. We note that $\calC_i$ is the column space of $\matP_{(i)}\in\bbR^{N\times N}$ which leads to the following:
\begin{equation}
\dim\{\calC_i\} =\rank{\matP_{(i)}} \leq N-\rank{\matA}\leq s,
\end{equation}
which follows from \eqref{eq:jordan2} of \Cref{lem:jordan} and Condition \ref{con:sparse} of \Cref{thm:necc_suff}. Then, we conclude that \eqref{eq:con_b} holds, and thus, Statement \ref{con:null} is proved. 

Hence, the proof of \Cref{thm:necc_suff} is completed.
\hfill$\blacksquare$
\section{Proof of \Cref{lem:jordan}}\label{app:jordan}
We present a constructive proof which based on the Jordan canonical form~\cite{horn2012matrix} of $\matA$. The number of Jordan blocks in $\matA$ corresponding to the eigenvalue 0 is  $N-\rank{\matA}$, and we assume that the number of such Jordan blocks of size $i$ is $q_i$, i.e., 
\begin{equation}\label{eq:qi_con}
N-\rank{\matA}=\sum_{i=1}^nq_i,
\end{equation}
where $n$ is the size of the largest Jordan block corresponding to the eigenvalue 0. Therefore, the Jordan canonical form of $\matA$ is given by
\begin{equation}
\matA=\matP^{-1}\begin{bmatrix}
\matJ\\
&\matN_{(1)}\\
&&\matN_{(2)}\\
&&&\ddots\\
&&&&\matN_{(n)}
\end{bmatrix}\matP\label{eq:Jordan},
\end{equation}
where $\matP\in\bbR^N$ is a nonsingular matrix, and $\matJ\in\bbR^{q\times q}$ is the nonsingular matrix consisting of the Jordan blocks corresponding to the nonzero eigenvalues of $\matA$ with $q=N-\sum_{i=1}^n iq_i$. Also, $\matN_{(i)}\in\bbR^{iq_i\times iq_i}$ is the nilpotent block diagonal matrix consisting all the $q_i$ Jordan blocks of size $i$ corresponding to the eigenvalues 0, i.e., for any integer $i>0$,
\begin{equation}\label{eq:nilpotent}
\matN_{(i)}=\begin{bmatrix}
\underset{\text{Block } 1}{\underbrace{\begin{matrix}
\zero & \eye_{i-1\times i-1}\\
0 &\zero
\end{matrix}}}\\
&\underset{\text{Block } 2}{\underbrace{\begin{matrix}
\zero & \eye_{i-1\times i-1}\\
0 &\zero
\end{matrix}}}\\
&\hspace{0.5cm}\ddots\\
&&\underset{\text{Block } q_i}{\underbrace{\begin{matrix}
\zero & \eye_{i-1\times i-1}\\
0 &\zero
\end{matrix}}}
\end{bmatrix},
\end{equation}
where $ \eye_{i\times i}\in\bbR^{i\times i}$ denotes the  identity matrix. Therefore, for $k\geq 0$, \eqref{eq:Jordan} gives
\begin{equation}\label{eq:basic_1}
\matP\matA^k=\begin{bmatrix}
\matJ^{k}\\
&\matN_{(1)}^{k}\\
&&\matN_{(2)}^{k}\\
&&\hspace{0.5cm}\ddots\\
&&&\matN_{(n)}^{k}
\end{bmatrix}\matP.
\end{equation}
However, from \eqref{eq:nilpotent},  for $0\leq k<i$, we obtain 
\begin{equation}\label{eq:nilpotent_pow}
\matN_{(i)}^k= \begin{bmatrix}
\zero & \eye_{i-k\times i-k}\\
\zero &\zero\\
&&\zero &   \eye_{i-k\times i-k}\\
&&\zero &\zero\\
&&&\hspace{0.5cm}\ddots\\
&&&&\zero &   \eye_{i-k\times i-k}\\
&&&&\zero &\zero\\
\end{bmatrix}
\end{equation}
Here, the last $k$ rows of each block along the diagonal of $\matN_{(i)}^k$ are zeros, and so, $\matN_{(i)}^k\in\bbR^{iq_i\times iq_i}$ has exactly $kq_i$ all zero rows, for $0\leq k< i$. Thus, $\matP\matA^k$ has $r_k$ more zero rows than $\matP\matA^{k-1}$ where
\begin{equation}\label{eq:ri_defn}
r_k\triangleq \sum_{j=k}^nq_j\leq N-\rank{\matA} ,
\end{equation}
for $1 \leq k\leq n$, and the inequality follows from \eqref{eq:qi_con}. We rearrange the rows of the matrices on the both sides of \eqref{eq:basic_1} so that these $r_k$ zero rows corresponding to each value of $1\leq k\leq n$ are grouped together as follows:
\begin{align}\label{eq:basic_2}
\begin{bmatrix}
\matP_{(0)}\in\bbR^{q\times N}\\
\tilde{\matP}_{(1)}\in\bbR^{r_1\times N}\\
\tilde{\matP}_{(2)}\in\bbR^{r_2\times N}\\
\vdots\\
\tilde{\matP}_{(n)}\in\bbR^{r_n\times N}
\end{bmatrix}\matA^k=\begin{bmatrix}
\begin{matrix}
\matJ^{k}\in\bbR^{q\times q} &\zero \in\bbR^{q\times N-q}
\end{matrix}\\
\zero \in\bbR^{r_1\times N}\\
\vdots\\
\zero \in\bbR^{r_k\times N}\\
\matQ_{(k,k+1)}\in\bbR^{r_{k+1}\times N}\\
\vdots\\
\matQ_{(k,n)}\in\bbR^{r_n\times N}\\
\end{bmatrix}\matP,
\end{align}
where we define the matrices as follows:
\begin{itemize}
\item The matrix $\matP_{(0)}$ is the full row rank submatrix formed by the first $q$ rows of the nonsingular matrix $\matP$.
\item  For $1\leq i\leq n$, the matrix $\tilde{\matP}_{(i)}$ is the full row rank submatrix of the nonsingular matrix $\matP$ formed by the $(i-1)\nth$ row above the last row of each of the $r_i$ Jordan blocks of size $\geq i$ corresponding to the eigenvalue 0.
\item For $k+1\leq j\leq n$, the matrix $\matQ_{(k,j)}$ is the full row rank matrix corresponding to the remaining nonzero rows in $\matN_{(i)}^k$. 
\end{itemize}

Next, we decompose $\matP$ as follows:
\begin{equation}
\matP = \begin{bmatrix}
\matP_{(0)}\in\bbR^{q\times N}\\
\zero\in\bbR^{N-q\times N}
\end{bmatrix} + \begin{bmatrix}
\zero\in\bbR^{\lb q+ \sum_{j=1}^{i-1}r_j\rb\times N}\\
\tilde{\matP}_{(i)}\in\bbR^{r_i\times N}\\
\zero\in\bbR^{\lb \sum_{j=i+1}^{n}r_j\rb\times N}
\end{bmatrix}.
\end{equation}
We define $\matP_{(i)} \triangleq  \begin{bmatrix}
\zero\\
\tilde{\matP}_{(i)}\in\bbR^{r_i\times N}\\
\zero
\end{bmatrix}\in\bbR^{N\times N}$ so that \eqref{eq:jordan1} is satisfied. Therefore, the proof is complete if we show that the matrices $\matP_{(0)}$ and $\matP_{(i)}$ satisfy Conditions \ref{con:jordan1} and \ref{con:jordan2} of \Cref{lem:jordan}, respectively.

To verify  Conditions \ref{con:jordan1}, we note that $\matP_{(0)}\in\bbR^{q\times N}$ is full row rank, and as a result, $\rank{\matP_{(0)}}=q$. Also, from \eqref{eq:basic_2}, we have
\begin{equation}
\begin{bmatrix}
\matP_{(0)}\\
\zero
\end{bmatrix}\matA^k = \begin{bmatrix}
\matJ^k\matP_{(0)}\\
\zero
\end{bmatrix}.
\end{equation}
Therefore, Condition \ref{con:jordan1} holds.

Finally, to verify Condition \ref{con:jordan2}, from \eqref{eq:basic_2}, we get
\begin{align}
\matP_{(i)}\matA^k &= \begin{bmatrix}
\zero\\
\tilde{\matP}_{(i)}\\
\zero
\end{bmatrix}\matA^k=\zero, k\geq i\label{eq:inter3}
\end{align} 
Also, from the definition of $\matP_{(i)}$ and \eqref{eq:basic_2},
\begin{align}
\rank{\matP_{(i)}\matA^{i-1}} \notag\\
&\hspace{-2.7cm}= \rank{
\tilde{\matP}_{(i)}\matA^{i-1}}= \rank{
\tilde{\matP}_{(i)}\matA^{i-1}\matP^{-1}}\\
& \hspace{-2.7cm}= \rank{\matQ_{(i-1,i)}} = r_i = \rank{\tilde{\matP}_{(i)}}\label{eq:inter_1}=\rank{\matP_{(i)}} ,
\end{align}
where \eqref{eq:inter_1} follows because $\matQ_{(i-1,i)},\tilde{\matP}_{(i)}\in\bbR^{r_i\times N}$ are full row rank. Further, from \eqref{eq:ri_defn},
\begin{equation}
\rank{\matP_{(i)}}  = \rank{\matP_{(i)}\matA^{i-1}} = r_i\leq N-\rank{\matA}.\label{eq:inter_2}
\end{equation}
Thus, from \eqref{eq:inter3} and \eqref{eq:inter_2}, we deduce that Condition \ref{con:jordan2} holds. 

Hence, we conclude that all the conditions of \Cref{lem:jordan} hold for our construction and the proof is complete.
\hfill$\blacksquare$
\section{Proof of \Cref{lem:nonzero}}\label{app:nonzero}
Using the Kalman rank type condition in \eqref{eq:combinatorial}, when the system is not sparse controllable
\begin{equation}\label{eq:Ncombinatorial}
\posp\lc \begin{bmatrix}\matA^{K-1} \matB_{\calS_1} &\matA^{K-2} \matB_{\calS_2} \ldots  \matB_{\calS_K}\end{bmatrix}\rc\subset\bbR^N,
\end{equation}
for any positive integer $K<\infty$ and index sets $\calS_k\subset\{1,2,\ldots,m\}$ with $\lv\calS_k\rv\leq s$, for all the integers $1\leq k\leq K$.

Let the sets $\lc\calS_i\subseteq\{1,2,\ldots,m\}\rc_{ i=1}^{ \tilde{K}}$, each with cardinality $s$, be such that they partition the set $\{1,2,\ldots,m\}$ as follows:
\begin{equation}\label{eq:concat}
\lv\calS_i\rv = s \text{ and } \bigcup_{i=1}^{\tilde{K}}\calS_i=\{1,2,\ldots,m\}.
\end{equation}
Then, \eqref{eq:Ncombinatorial} immediately yields that for any positive integer $\tilde{N}<\infty$,
\begin{multline}
\posp\bigg\{
\big[ \begin{matrix}
\matA^{\tilde{K} \tilde{N}-1}\matB_{\calS_1} &  \matA^{\tilde{K} \tilde{N}-2}\matB_{\calS_1} &\ldots 
& \matA^{(\tilde{K}-1) \tilde{N}}\matB_{\calS_1}  \end{matrix}\\
\begin{matrix}
\ldots & \matA^{(\tilde{K}-1) \tilde{N}-1}\matB_{\calS_2}&\ldots \matA^{(\tilde{K}-2) \tilde{N}}\matB_{\calS_2} &\ldots
\end{matrix}\\
\begin{matrix}
\ldots &
  \matA^{\tilde{N}-1}\matB_{\calS_{\tilde{K}}}&\ldots&\matB_{\calS_{\tilde{K}}}
\end{matrix}\big]\bigg\}\subset \bbR^N.
\end{multline}
On rearranging the columns, this is equivalent to the following:
\begin{equation}\label{eq:B*con}
\posp\bigg\{
\begin{bmatrix}
\matA^{\tilde{N}-1}\matB^* & \matA^{\tilde{N}-2}\matB^*&\ldots&\matB^*
\end{bmatrix}\bigg\}\subset \bbR^N,
\end{equation}
where we define $\matB^*\in\bbR^{N\times \tilde{K}s}$ as follows:
\begin{equation}\label{eq:B*defn}
\matB^*\triangleq \begin{bmatrix}
\matA^{(\tilde{K} -1)\tilde{N}}\matB_{\calS_1} & \matA^{(\tilde{K} -2)\tilde{N}}\matB_{\calS_2}  &\ldots&\matB_{\calS_{\tilde{K}}} 
\end{bmatrix}.
\end{equation}
Therefore, the linear dynamical system $(\matA,\matB^*)$ is not controllable using nonnegative controls. Applying \Cref{thm:nonnegative} to the system $(\matA,\matB^*)$, we see that at least one of the two conditions of \Cref{thm:nonnegative} is violated:
\begin{itemize}
\item $\exists(\lambda,\vecz\neq\zero)$ such that 
$\vecz\tran\matA=\lambda\vecz\tran$ and $\vecz\tran\matB^*=\zero$.
\item $\exists(\lambda\geq 0,\vecz\neq\zero)$ such that 
$\vecz\tran\matA=\lambda\vecz\tran$  and $\vecz\tran\matB^*\leq \zero$.
\end{itemize}
Further, if $\vecz\tran\matA=\lambda\vecz\tran$, from \eqref{eq:B*defn}, we obtain
\begin{equation}\label{eq:zerolambda}
\vecz\tran\matB^* = \begin{bmatrix}\lambda^{(\tilde{K}-1)N}\vecz\tran\matB_{\calS_1} & \ldots&\vecz\tran\matB_{\calS_{\tilde{K}}}
\end{bmatrix}.
\end{equation}
Here, $\lambda\neq0$ because $\matA$ is nonsingular. Thus, if $\vecz\tran\matB^*=\zero$, then \eqref{eq:concat} and \eqref{eq:zerolambda} imply that $\vecz\tran\matB=\zero$. Similarly, if $\vecz\tran\matB^*\leq \zero$, then $\vecz\tran\matB\leq\zero$. This leads to at least one of the following:
\begin{itemize}
\item $\exists(\lambda,\vecz\neq\zero)$ such that 
$\vecz\tran\matA=\lambda\vecz\tran$ and $\vecz\tran\matB=\zero$.
\item $\exists(\lambda\geq 0,\vecz\neq\zero)$ such that 
$\vecz\tran\matA=\lambda\vecz\tran$  and $\vecz\tran\matB\leq \zero$.
\end{itemize}
So, if the system $(\matA,\matB)$ is not nonnegative $s-$sparse controllable, at least one of the conditions of \Cref{thm:nonnegative} is violated. This implies that the conditions of \Cref{thm:nonnegative} are sufficient for nonnegative $s-$sparse controllability. 

Hence, the proof is complete.
\hfill$\blacksquare$
\section{Proof of \Cref{lem:union}}\label{app:union}
We first note that $\calZ=\posp\{\matZ\}$ implies that all the columns of $\matZ$ belong to $\calZ$ and they linearly span $\calZ$. As a result, $\rank{\matZ} = d$. Further, for any $\vecz\in\calZ$, we define 
\begin{equation}
\calV_{\vecz}\triangleq \lc\vecalpha\in\bbR^m:\matZ\vecalpha=\vecz \text{ and }\vecalpha\geq \zero\rc.
\end{equation}
 The set $\calV_{\vecz}$ is non-empty due to the relation $\calZ=\posp\{\matZ\}$. Let $\bar{\matZ}\vecalpha=\bar{\vecz}$ be the reduced-row echelon form of the matrix equation $\matZ\vecalpha=\vecz$, after removing the zero rows. Consequently, we obtain
\begin{equation}
\calV_{\vecz}= \lc\vecalpha\in\bbR^m:\bar{\matZ}\vecalpha=\bar{\vecz} \text{ and }\vecalpha\geq 0\rc.
\end{equation}
By the fundamental theorem in linear programming, the system $\bar{\matZ}\vecalpha=\bar{\vecz}$ has a basic feasible solution $\vecalpha\geq \zero$ with at most $\rank{\bar{\matZ}}$ nonzero elements. However, we note that
$\rank{\bar{\matZ}}=\rank{\matZ}= d$.
 As a consequence, there exists a $d-$sparse vector $\vecalpha\geq\zero$ such that $\vecz=\matZ\vecalpha.$ So, we conclude that \eqref{eq:union} holds.
Hence, the proof is complete.
\hfill$\blacksquare$
\section{Proof of \Cref{cor:sbound}}\label{app:sbound}
Since the system is nonnegative controllable, from \Cref{thm:nonnegative}, Conditions \ref{con:control} and \ref{con:nonnegative} of \Cref{thm:necc_suff} hold. Accordingly, it suffices to check if Condition \ref{con:sparse} holds. 

Let $\calN$ be the null space of $\matA\tran$. The nonnegative controllability of the system implies that for any $\vecx\in\calN$, there exists a positive integer $K<\infty$ and $\lc\vecu_k\in\bbR^N_+\rc_{k=1}^K$ such that 
\begin{equation}
\vecx=\sum_{k=1}^{K}\matA^{K-k}\matB\vecu_k.
\end{equation}
Multiplying both the sides of above equation with the orthogonal projection operator $\matA^\dagger\in\bbR^{N\times N}$ of $\calN$, we get
\begin{equation}\label{eq:null_con1}
\vecx = \matA^{\dagger}\vecx = \matA^{\dagger}\matB\vecu_K,
\end{equation}
because $\matA^\dagger\matA=\zero$. Now, from \eqref{eq:null_con1},  we deduce that 
\begin{equation}
\calN\subseteq \posp\lc\matA^{\dagger}\matB\rc.
\end{equation}

Further, the minimum number of vectors required to positively span a subspace is one more than its dimension~\cite[Corollary 5.5]{regis2016properties}. Since the dimension of $\calN$ is $N-\rank{\matA}$, we have $m\geq N-\rank{\matA}+1$. Therefore,
\begin{equation}
s\geq m-1 \geq  N-\rank{\matA}.
\end{equation} 
Hence, Condition \ref{con:sparse} of \Cref{thm:necc_suff} also holds, and the proof is complete.\hfill$\blacksquare$

\bibliographystyle{IEEEtran}
\bibliography{Supporting_Files/IEEEabrv,Supporting_Files/bibJournalList,Supporting_Files/PControl_cite}
\end{document}